\documentclass[11pt]{article}%
\usepackage{amssymb}
\usepackage[onehalfspacing]{setspace}%
\usepackage{amsmath}%
\usepackage{amsfonts}%
\usepackage{graphicx}
\providecommand{\U}[1]{\protect\rule{.1in}{.1in}}
\newtheorem{theorem}{Theorem}

\newtheorem{definition}[theorem]{Definition}

\newtheorem{lemma}{Lemma}

\newtheorem{proposition}{Proposition}
\newtheorem{remark}{Remark}

\begin{document}

\title{Whataboutism\thanks{Eliaz acknowledges financial support from ISF grant
455/24. Spiegler acknowledges financial support from Leverhulme Trust grant
RPG-2023-120. We thank Tuval Danenberg and Refine.ink for helpful comments.}}
\author{Kfir Eliaz and Ran Spiegler\thanks{Eliaz: Tel-Aviv University and King's
College London. E-mail: kfire@tauex.tau.ac.il. Spiegler: Tel-Aviv University
and University College London. E-mail: rani@tauex.tau.ac.il.}}
\maketitle

\begin{abstract}
We propose a model of whataboutism --- a rhetorical strategy that deflects
criticism by citing similar misconduct that goes uncriticized on the critic's
side --- and study its implications for social norms governing offensive
speech. In an infinite-horizon psychological game with two rival camps, agents
weigh the intrinsic benefit of offensive speech against the risk of
condemnation. External criticism can be deflected via an equilibrium-based
whataboutism rebuttal. We characterize the unique dynamically stable
Psychological Subgame Perfect Equilibrium and show that the availability of
whataboutism exacerbates offensive speech, to the extent that civility norms
can break down entirely, especially in polarized societies.\bigskip
\bigskip\bigskip\bigskip\pagebreak

\end{abstract}

\section{Introduction}

A fundamental tenet of political liberalism is that even when different
factions in society have conflicting ends, they should be able to agree on the
legitimate means toward these ends. As Sunstein (2025) writes:
\textquotedblleft Liberals want people who disagree with each other to find a
way to live together, and to smile --- or at least to nod respectfully --- at
their differences.\textquotedblright\ For example, all sides should agree that
public demonstrations are acceptable as long as they do not deteriorate into
vandalism. Likewise, making sarcastic comments about a political opponent may
be fine, but humiliating her family members should be off-limits --- even if
one believes that the humiliation is an effective tactic.

In practice, this norm is often enforced in a decentralized manner, via social
sanctions that one individual imposes on another. For example, if one
politician mocks an opponent's family members, she will be condemned by
politicians or by social-media users. Such condemnations impose a social cost
on the offender. They can come from all sides of the political aisle, but
intuitively, they carry more weight when they come from the politician's own
camp. Moreover, failure to condemn inappropriate behavior by a member of one's
own camp is itself a potential cause for condemnation. Agents will take these
anticipated costs into account when choosing whether to engage in (or tacitly
condone) offensive expression.

\textit{Whataboutism} is a tactic that people sometimes use to deflect a
criticism against violations of social norms. Merriam-Webster dictionary
defines whataboutism as \textquotedblleft the act or practice of responding to
an accusation of wrongdoing by claiming that an offense committed by another
is similar or worse.\textquotedblright\ For example, when a prominent public
figure from camp A is reprimanded by a member of camp B for mocking the spouse
of a camp B politician (or for keeping silent when another prominent member of
camp A did so), a whataboutism response can take the following form:
\textquotedblleft Why are you criticizing me? Remember when a prominent member
of your party ridiculed the \textit{child} of a politician from my camp? I
don't recall seeing anybody from your party condemning her.\textquotedblright

Whatboutism is a modern-day instance of an ancient rhetorical device known as
\textit{tu quoque} --- which Merriam-Webster dictionary defines as
\textquotedblleft a retort charging an adversary with being or doing what the
adversary criticizes in others.\textquotedblright\ There is a popular
sentiment that whataboutism has become increasingly common in recent decades
(e.g., see Dykstra (2020)), and that it has sullied public discourse and
eroded the liberal ethos that we can agree over means even when we disagree
over ends. In Sucio (2025), the \textit{Forbes} journalist Peter Sucio
wrote:\footnote{https://www.forbes.com/sites/petersuciu/2025/04/04/whataboutism-has-made-civil-debate-on-social-media-nearly-impossible/}
\textquotedblleft No matter the topic -- especially those about politics and
world affairs -- it is impossible to have any civil debate on social media
without someone responding by making a counter-accusation. This isn't new, but
it has taken such a deep foothold in American politics in recent years, and it
isn't hard to see why.\textquotedblright\ Sucio went on to attribute the
spread of whataboutism to the rise of social media and political polarization.

We propose a formalization of whataboutism as an equilibrium phenomenon, and
use this formalism to explore the driving forces behind the phenomenon and its
implications for public discourse. In constructing our model, we are guided by
three desiderata. First, context should matter: Insulting a soldier on
Memorial Day is more offensive than on normal days, and whataboutism arguments
would make use of this distinction. Second, the model should enable criticism
to be directed not only at offensive speech, but also at subsequent failures
to condemn it from within the offender's camp. Finally, availability of the
whataboutism argument should be endogenous: The scenario referred to by one
camp's whataboutism argument is drawn from past speech acts performed by
agents on the other camp, which in turn obeyed the same norms shaped by
whataboutism. This means that in our model, the social cost of offensive
speech should be a function of agents' equilibrium beliefs regarding the
behavior of others.

Accordingly, we construct a model that falls into the general class of
extensive-form psychological games (Geanakopolos et al. (1989), Battigalli and
Dufwenberg (2009,2022)). Its basic outline is as follows. Society is divided
into two symmetric camps. There is a set of states. In each state, an agent
from a specific camp is selected to choose whether to perform a speech act
that gives him an idiosyncratic intrinsic payoff but offends the other camp.
The state captures the context that determines sensitivity to offensive
speech. States are ranked by sensitivity: In higher-sensitivity states, agents
are less likely to derive much satisfaction from offensive speech and more
likely to be insulted by it.\footnote{See Gift and Lastra-Anad\'{o}n (2026)
for experimental investigation of how context affects sensitivity to offensive
speech.}

An agent who considers performing or tacitly supporting an offensive speech
act weighs the intrinsic payoff it provides against the cost of being
condemned for it. The condemnation can arrive from the rival camp or from
one's own camp. Agents have repeated opportunities to voice disdain for the
offensive speech act. When agents fail to condemn one of their own, they are
themselves exposed to condemnation in the same way --- capturing statements
such as: \textquotedblleft Shame on you for keeping silent when a member of
our/your camp mocked an opponent's spouse.\textquotedblright\ The game ends as
soon as a condemnation takes place.

A condemnation's effectiveness depends on whether it is internal or external.
Condemnation by a member of one's own camp is always effective; it
automatically inflicts a cost on an agent who performed/supported offensive
speech. In contrast, external condemnation can be rebuffed with a whataboutism
argument. The argument is based on drawing a play path from the distribution
induced by equilibrium strategies, in a state that satisfies three
requirements: $(1)$ the state is at least as sensitive as the current one;
$(2)$ the initial speech-act choice is faced by an agent from the
\textit{rival} camp; and $(3)$ that agent performs an offensive speech act and
meets no internal rebuke along the play path. This definition captures
whataboutism arguments, which take the following general form:
\textquotedblleft In a different context that is at least as sensitive as the
present one, someone from your camp performed an offensive speech act and no
one from your camp condemned it.\textquotedblright\ The agent's cost from
external condemnation is defined as the probability that he cannot deflect it
with whataboutism. The fact that the cost is determined by the \textit{ex-ante
equilibrium distribution over play paths} is what turns our model into a
\textit{psychological} game.

We characterize the set of Psychological Subgame Perfect Equilibria (PSPE) in
this model. In PSPE, players follow a symmetric, stationary strategy, such
that whataboutism arguments only invoke a state of equivalent sensitivity. In
each state, there is a history-independent propensity to avoid (or internally
condemn) offensive speech. This propensity increases with the state's
sensitivity. When sensitivity is below some threshold, the propensity drops to
zero: There is a complete breakdown of norms, such that agents always engage
in offensive speech in that state, face no internal condemnation, and always
use whataboutism to deflect external criticism. A dynamic stability criterion
selects a unique PSPE: the one with the lowest amount of offensive speech
(i.e., the lowest sensitivity threshold below which norms break down). As
society becomes more polarized (in the sense that agents on both camps feel
more strongly about expression), offensive speech and the use of whataboutism
become more prevalent.

We compare PSPE to a benchmark model in which whataboutism is not allowed. In
that model, every state sees a positive fraction of agents who avoid (and
internally condemn) offensive speech. These findings are consistent with the
intuition that whataboutism erodes the civility of public discourse, and more
so in a polarized society.

\section{A Model}

There are two rival \textit{camps}, $1$ and $2$, each consisting of a measure
one of agents. We use $-i$ to denote the rival of camp $i$. There is a set of
\textit{states} $S=S_{1}\cup S_{2}$, where $S_{i}=\left\{  s_{i}^{1}%
,...,s_{i}^{n}\right\}  $ and $S_{1}\cap S_{2}=\emptyset$. The superscript
$m=1,...,n$ represents the degree of \textit{sensitivity to offensive speech}
that is associated with the state. When $m^{\prime}\geq m$, we sometimes write
$s_{i}^{m^{\prime}}\succsim s_{j}^{m}$ for any $i,j$, and say that
$s_{i}^{m^{\prime}}$ is at least as sensitive as $s_{j}^{m}$.

Denote $I(s)=i$ when $s\in S_{i}$. In state $s$, camp $I(s)$ gets an
opportunity to perform an offensive speech act targeted at the other camp. In
state $s_{i}^{m}$, the act gives \textit{intrinsic utility} $v\sim U[0,g_{m}]$
to a member of camp $i$, and \textit{intrinsic disutility} drawn from
$U[0,b_{m}]$ to a member of camp $-i$. Assume $1<g_{n}<\cdots<g_{1}<2$ and
$1<b_{1}<\cdots<b_{n}<2$. Thus, offensive speech in a more sensitive state
generates lower intrinsic utility for offenders and higher disutility for victims.

We now describe an infinite-horizon, sequential-move psychological game in
which actions are speech acts. In stage $0$, a state $s$ is publicly drawn
from the uniform distribution over $S$. In stage $1$, an agent from camp
$I(s)$ is randomly selected, privately observes his $v$, and chooses a
publicly observed action $a\in\{0,1\}$, where $a=1$ means that the agent
performs an offensive speech act (often referred to as \textquotedblleft the
act\textquotedblright). At every subsequent stage $k>1$, an agent is randomly
and uniformly selected from \textit{either }camp; his valuation is newly and
independently drawn. Agents' action set remains $\{0,1\}$, but now $a=1$ means
\textit{supporting} previous agents, while $a=0$ means \textit{condemning}
them. The game is terminated as soon as an agent chooses $a=0$.

Agents from camp $-I(s)$ follow an exogenous rule (which we endogenize below):
Condemning camp $I(s)$ with probability $\lambda b_{s}$, where $\lambda
\in(0,\frac{1}{2})$. The only endogenous objects are the strategies of camp
$I(s)$ agents. Since agents from the rival camp $-I(s)$ are selected at each
stage with probability $\frac{1}{2}$ and terminate the game with probability
$\lambda b_{s}$, it follows that any profile $\sigma$ of strategies by camp
$I(s)$ agents induces a well-defined probability distribution over finite
terminal histories. For every $s\in S$, let $\alpha_{\sigma}(s)>0$ denote the
ex-ante probability that the agent who terminates the game in $s$ is from camp
$-I(s)$, under the strategy profile $\sigma$.

To complete the game's description, we define the agents' \textit{payoffs}. As
we will see, preferences are not entirely consequentialistic. Rather, they fit
the framework of psychological game theory:

\begin{itemize}
\item When an agent in camp $I(s)$ plays $a=0$ in any stage, his payoff is
normalized to $0$.

\item When the agent plays $a=1$ and his immediate successor also plays $a=1$,
his payoff is $v$. The interpretation is that the agent derives an intrinsic
utility from performing/supporting the offensive speech act against the rival
camp, and incurs no social cost because he does not face immediate condemnation.

\item When the agent plays $a=1$ and his immediate successor is from his own
camp $I(s)$ and plays $a=0$, the agent's payoff is $v-1$. The interpretation
is that in addition to the intrinsic \textquotedblleft warm
glow\textquotedblright\ from performing/supporting the offensive speech act,
the agent experiences a social sanction in the form of immediate condemnation
from his own camp.

\item Finally (this is where the psychological-game component enters), when
the agent plays $a=1$ and his immediate successor is from the rival camp
$-I(s)$ and plays $a=0$, the agent's payoff is defined relative to a fixed
strategy profile $\sigma$. Specifically, his payoff is $v$ if he can mount a
\textquotedblleft\textit{rebuttal}\textquotedblright, and $v-1$ if he cannot.
A rebuttal consists of a selection of a state $s^{\prime}\succsim s$ for which
$I(s^{\prime})=-I(s)$, and a terminal history in that state --- drawn randomly
from the distribution over terminal histories induced by $\sigma$ --- such
that according to $\sigma,$ the stage-$1$ agent in that history plays $a=1$,
and all subsequent agents from camp $I(s^{\prime})$ play $a=1$ in that
terminal history.\bigskip
\end{itemize}

The game form captures in stylized fashion the following situation. A member
from one camp has an opportunity (in a mainstream-media interview, or on
social media) to verbally attack the rival camp. This agent may derive
intrinsic satisfaction from this attack. Alternatively, the attack may advance
the camp's goal, or energize \textquotedblleft the base\textquotedblright.
When the agent seizes the opportunity and carries out the verbal attack, there
are repeated subsequent opportunities for members from either camp to condemn
the attack or tacitly condone it by keeping silent. Like the original attack,
these opportunities can involve interviews on traditional media or
high-visibility social media posts. The assumption that $g_{m}>1$ for every
$m$ means that there is a positive measure of \textquotedblleft%
\textit{fanatics}\textquotedblright, who are willing to perform/support
offensive speech acts against the rival camp, regardless of the state's
sensitivity, even if they face certain condemnation.

The interpretation of the psychological-game component is as follows. When an
agent from camp $-I(s)$ criticizes an agent from camp $I(s)$ for
performing/supporting an offensive speech act, the latter agent can employ a
\textquotedblleft whataboutism\textquotedblright\ counterargument: In
circumstances that are at least as sensitive as $s$, an agent from the
critic's own camp engaged in offensive speech and faced no subsequent internal
rebuke. The psychological-game component arises because the rebuttal is based
on a \textit{random draw} from the distribution over play paths induced by the
strategy profile $\sigma$. The rebutting agent needs to \textit{retrieve a
concrete scenario} that would arm him with the whataboutism argument. In this
scenario, the state needs to be at least as sensitive as $s$; an agent from
the rival camp $-I(s)$ needs to perform an offensive speech act; and in the
ensuing play path, there needs to be no internal condemnation, such that the
agent who terminates the game is from camp $I(s)$.

Our modeling of the agent's retrieval process is based on two desiderata: We
want to assume that the agent has some control over the retrieval, but also
that success is uncertain. Consequently, our model mixes elements of
\textquotedblleft directed\textquotedblright\ and \textquotedblleft
random\textquotedblright\ search, in analogy to the macro-theory literature
(see Wright et al. (2021)). The agent deliberately targets a specific state,
but then he --- or someone from his own camp who argues on his behalf ---
randomly retrieves from memory a concrete state-specific scenario. Thus, when
the agent plays $a=1$, he does not know for sure whether he will be able to
successfully use whataboutism to rebuff out-group condemnation, but his choice
of which state to target is optimal in this regard.\bigskip

\noindent\textit{The interpretation of }$a$\textit{\ as a speech act}

\noindent Why do we interpret $a$ as a speech act, rather than as a physical
action? Imagine that the whataboutism rebuttal were infeasible. In this case,
we would be able to interpret $a=1$ as a physical action that benefits one's
own camp and harms the rival camp. An agent who takes this action incurs a
cost if his immediate successor plays $a=0$. This is a fairly standard
\textquotedblleft network externality\textquotedblright: The agent's benefit
from $a=1$ depends on whether other agents take this action, too.

The whataboutism rebuttal, which enables an agent to save the cost, is hard to
interpret except as a verbal argument that has to do with condemnations of bad
actions. Thus, the feasibility of whataboutism makes it more natural to
interpret actions at stages $k>1$ as verbal acts of condemnation and (explicit
or tacit) support. In principle, agents' choices at stage $k=1$ $could$ be
interpreted as physical actions as well as speech acts. Given our assumption
that the payoff structure is stationary and does not distinguish between $k=1$
and $k>1$, we find it more natural to restrict attention to the speech-act
interpretation of actions at all stages of the game.\bigskip

\noindent\textit{\textquotedblleft Microfounding\textquotedblright\ the
probability of external criticism}

\noindent The rule that determines the probability of external condemnation
can be endognized as follows. In state $s$, members of camp $-I(s)$ incur a
stochastic cost $c$ when they condemn an offensive speech act performed or
supported by members of camp $I(s)$. The cost captures the difficulty of
reaching out from one camp to another. Assume that $c\sim U[0,\bar{c}]$, where
$\bar{c}>b_{n}.$ Assume that a member of camp $-I(s)$ condemns the verbal
attack from camp $I(s)$ if and only if his disutility from the attack exceeds
the cost $c$. Recall that in state $s$, $u\sim\lbrack0,b_{s}]$. Therefore, the
aggregate probability that camp $-I(s)$ will condemn offensive speech by camp
$I(s)$ is
\[
\frac{1}{\bar{c}b_{s}}\int_{0}^{b_{s}}\int_{0}^{u}dcdu=\frac{b_{s}}{2\bar{c}%
}=\lambda b_{s}
\]
where $\lambda=1/2\bar{c}$. Note that $\lambda<\frac{1}{2}$ since $\bar
{c}>b_{n}>1.$

\section{A Benchmark}

We begin our analysis with a benchmark model in which agents \textit{cannot}
use whataboutism to rebuff criticisms. In this case, an agent who performs an
offensive speech act incurs a cost of $1$ whenever his immediate successor
(regardless of his affiliation) issues a condemnation. Payoffs in this variant
are entirely consequentialistic, and thus do not require the
psychological-games machinery. Moreover, in this variant, analysis can be
conducted for each state in isolation, whereas the main version of our model
creates a linkage across states via the whataboutism counterargument. This
comparison will enable us to explore the role of whataboutism in shaping the
norms of public discourse.

In what follows we assume that when an agent of camp $I(s)$ is indifferent
between $a=0$ and $a=1,$ he chooses $a=1.$\bigskip

\begin{proposition}
\label{prop benchmark}There is a unique Subgame Perfect Equilibrium in the
benchmark model. In each state $s$, each camp $I(s)$ agent plays $a=1$ if and
only if his intrinsic payoff $v$ is above the cutoff%
\begin{equation}
v_{s}^{\ast}=\frac{\lambda g_{s}b_{s}}{2g_{s}-1}\label{benchmark cutoff}%
\end{equation}
\bigskip
\end{proposition}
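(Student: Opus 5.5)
The approach is to reduce the equilibrium problem to a one-dimensional recursion on condemnation probabilities and show that this recursion has a unique bounded solution, which is the constant (\ref{benchmark cutoff}).

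First I fix a state $s$ and observe that, since the game ends at the first $a=0$, every non-terminal history is a string of $1$'s. Such a history is identified by its stage $k$ (plus the irrelevant identities of past movers). Consider a camp-$I(s)$ agent moving at stage $k$ who plays $a=1$. In the benchmark he loses $1$ whenever his immediate successor condemns, whatever that successor's camp. Let $q_{k+1}$ be the probability that an own-camp agent selected at stage $k+1$ plays $a=0$ under the candidate SPE, given the continuation history. His expected payoff from $a=1$ is then
\[
v-\tfrac{1}{2}\left(\lambda b_{s}+q_{k+1}\right),
\]
and his payoff from $a=0$ is $0$. The term subtracted from $v$ does not depend on his own $v$, because valuations are drawn independently each stage. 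So every best response is a cutoff rule: play $a=1$ iff $v\geq c_{k}:=\frac{1}{2}(\lambda b_{s}+q_{k+1})$, with ties going to $a=1$ by assumption. This holds at stage $1$ as well, since the payoff structure does not distinguish stage $1$ from later stages.

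Next, since $v\sim U[0,g_{s}]$, the own-camp condemnation probability at stage $k$ is $q_{k}=\min\{c_{k}/g_{s},1\}$. Because $q_{k+1}\leq1$ and $\lambda b_{s}<1$, we get $c_{k}<1<g_{s}$, so the $\min$ never binds and $q_{k}=c_{k}/g_{s}$. This yields the recursion
\[
c_{k}=\tfrac{1}{2}\lambda b_{s}+\tfrac{1}{2g_{s}}c_{k+1}.
\]
Any SPE must generate a sequence $(c_{k})$ satisfying this recursion, with $c_{k}\in[0,1]$ for all $k$. The constant solution is $c^{\ast}=\lambda b_{s}/(2-1/g_{s})=\lambda g_{s}b_{s}/(2g_{s}-1)$, which is exactly (\ref{benchmark cutoff}). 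Substituting back confirms that the stationary cutoff profile is an SPE.

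The main point is uniqueness, that is, ruling out non-stationary or history-dependent equilibria. Here I use that the recursion is a contraction looking forward. Writing $d_{k}=c_{k}-c^{\ast}$ gives $d_{k}=d_{k+1}/(2g_{s})$, hence $|d_{k}|=(2g_{s})^{-j}|d_{k+j}|\leq(2g_{s})^{-j}$ for every $j$. Since $2g_{s}>2$, letting $j\to\infty$ forces $d_{k}=0$ for all $k$.

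Dependence on the identities in the history changes nothing. The same argument applies pointwise to any continuation history, with $q_{k+1}$ replaced by the relevant conditional expectation, which still lies in $[0,1]$. Mixing at indifference occurs only on a measure-zero set of $v$ and is in any case excluded by the tie-breaking rule. Finally, because the benchmark has no cross-state linkage, the argument runs state by state.
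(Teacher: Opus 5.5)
Your proof is correct and follows essentially the same route as the paper. Both establish a cutoff rule at each stage, derive the recursion $c_k=\tfrac{1}{2}\lambda b_s+c_{k+1}/(2g_s)$, and get uniqueness because deviations from the fixed point grow by the factor $2g_s>1$ going forward, which contradicts boundedness. Two small differences: your direct check that $c_k<1<g_s$ replaces the paper's interiority-by-contradiction step (Step 2), and your explicit handling of history-dependent strategies spells out a point the paper leaves implicit.
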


\noindent\textbf{Proof. }Since we can analyze equilibrium behavior state by
state, we will omit the $s$ subscripts. The proof proceeds in three
steps.\medskip

\noindent\textbf{Step 1. }\textit{At each stage, agents follow a cutoff rule}.

\noindent We show that for every $k>0$, there is a cutoff $v^{k}\in
\lbrack0,g]$ such that a member of camp $I(s)$ plays $a=1$ if and only if
$v\geq v^{k}$. To see why, recall that agents do not observe past agents'
valuations. Stage-$k$ agent's belief regarding the behavior of stage-$(k+1)$
agent is independent of agent $k$'s valuation. Therefore, if a stage-$k$ agent
with valuation $v$ plays $a=1$ ($a=0$), then the same action is also optimal
for a agent with valuation $v^{\prime}>v$ ($v^{\prime}<v$). $\square$\medskip

\noindent\textbf{Step 2. }$v^{k}\in(0,g)$ \textit{for every} $k$.

\noindent Suppose $v^{k}=0$. This means that camp $I(s)$ agents with
$v<\frac{1}{2}\lambda b$ play $a=1$. Such agents earn a negative payoff,
regardless of $v^{k+1}$, a contradiction. Now suppose $v^{k}=g$. This means
that camp $I(s)$ agents with $v>1$ play $a=0$, even though playing $a=1$ would
generate a strictly positive payoff, a contradiction. $\square$\medskip

It follows that in SPE, there is a sequence of interior cutoffs $(v^{1}%
,v^{2},...)$ such that for every $k$,%
\begin{equation}
v^{k}=\frac{1}{2}\lambda b+\frac{1}{2}\frac{v^{k+1}}{g}\label{vk benchmark}%
\end{equation}
\medskip\noindent\textbf{Step 3. }\textit{The only solution to the system of
equations given by (\ref{vk benchmark}) is stationary and given by
(\ref{benchmark cutoff}). }

\noindent Rearrange (\ref{vk benchmark}) as follows:%
\begin{equation}
v^{k+1}-v^{k}=(2g-1)v^{k}-\lambda bg\label{vk rearranged}%
\end{equation}
Note that $2g-1>0$. If $v^{k+1}-v^{k}>0$ $(<0$) for some $k$, then the
sequence $v^{k},v^{k+1},v^{k+2},...$ is a divergent increasing (decreasing )
sequence, which is a contradiction because $v$ is bounded between $(0,g)$. It
follows that the only solution to (\ref{benchmark cutoff}) is $v^{k+1}=v^{k}$
for every $k$, which yields (\ref{benchmark cutoff}). $\square$\medskip

Note that since $\lambda<\frac{1}{2}$, $g>1$ and $b<2$, the R.H.S. of
(\ref{benchmark cutoff}) is below $g$, hence $v^{\ast}$ is indeed an interior
cutoff given by an indifference condition. $\blacksquare$\bigskip

The equilibrium fraction of camp $I(s)$ agents who do not perform/support
offensive speech in state $s$ is%
\begin{equation}
x_{s}=\frac{\lambda b_{s}}{2g_{s}-1}\label{benchmark x}%
\end{equation}
This fraction is always in $(0,1)$. The fact that $x_{s}>0$ for every $s$
means that norms of public discourse never break down entirely --- i.e., there
is always a positive fraction of agents who refrain from offensive speech or
criticize it internally. Note that $x(s)$ is \textit{increasing} in the
parameter $b_{s}$ that determines the disutility from offensive speech that
members of camp $-I(s)$ incur. It is also \textit{increasing} in the parameter
$\lambda$ that affects camp $I(s)$ agents' exposure to external criticism.
Finally, it is \textit{decreasing} in the parameter $g_{s}$ that determines
the intrinsic utility that camp $I(s)$ agents derive from verbally attacking
the rival camp.

Note that the R.H.S. of (\ref{vk benchmark}) can be interpreted as a standard
social-preference effect: An agent trades off the intrinsic utility from bad
behavior (offensive speech) and the social cost in the form of frequency with
which he is condemned for such behavior. Thus, the benchmark model is
essentially a standard model of pro- or anti-social behavior in the presence
of social preferences.

\section{Analysis: Whataboutism}

This section characterizes equilibrium behavior in the main model. As a
preliminary step, we derive an expression for agents' psychological expected
payoffs. Fix a strategy profile $\sigma$ and consider a state $s$. Consider a
camp $I(s)$ agent with valuation $v$, who is selected to move in some stage
$k$. If the agent chooses $a=0$ and thus terminates the game, his payoff is by
definition $0$. Suppose now that the agent chooses $a=1$. Let $\beta_{\sigma}$
denote the probability that the stage-$(k+1)$ agent will choose $a=0$, given
the history, conditional on being from the same camp $I(s)$. Recall that
$\alpha_{\sigma}(s^{\prime})$ is the ex-ante probability (induced by $\sigma$)
that the agent who terminates the game in $s^{\prime}$ is from camp
$-I(s^{\prime})$. This is the probability that a randomly sampled
state-$s^{\prime}$ play path will support a whataboutism rebuttal by a member
of camp $-I(s^{\prime})$. Then, the stage-$k$ agent's \textit{psychological
expected payoff} given $\sigma$ from playing $a=1$ is%
\begin{equation}
v-\frac{1}{2}\beta_{\sigma}-\frac{1}{2}\lambda b_{s}\cdot\min_{s^{\prime
}\succsim s\mid I(s^{\prime})=-I(s)}(1-\alpha_{\sigma}(s^{\prime
}))\label{payoff from a=1}%
\end{equation}

The first term in this expression is the agent's intrinsic payoff from
performing/supporting the speech act. The second term represents the
probability that the agent faces condemnation from within his own camp, which
by assumption is impossible to deflect. The third term represents the
probability that the agent faces external condemnation which he cannot rebut
with a whataboutism argument. Note that the minimum operator is due to the
agent's ability to select among the eligible states (i.e., those that are at
least as sensitive as $s$) to maximize the probability of being able to say:
\textquotedblleft But what about state $s^{\prime}$? An agent from your camp
performed an offensive speech act in that state, which is at least as
sensitive as $s$, and I didn't see anyone from your camp criticizing
him.\textquotedblright

Note that the final term is calculated with respect to the strategy profile
$\sigma$. As a result, the agent's expected payoff is defined in terms of the
\textit{ex-ante} strategy profile, rather than the continuation strategy given
the history at which the agent moves. This is a hallmark of psychological-game
expected payoff calculations.

We now introduce our notion of equilibrium, due to Geanakopolos et al.
(1989).\bigskip

\begin{definition}
A strategy profile $\sigma$ is a Psychological Subgame Perfect Equilibrium
(PSPE) if at every history, each agent chooses $a=0$ if and only if it
maximizes his psychological expected payoff given $\sigma$.\bigskip
\end{definition}

The sequential rationality requirement is exactly as in conventional SPE; the
only difference is the dependence of agents' expected payoff from $a=1$ on the
ex-ante strategy profile, which includes behavior in other states.\bigskip

\noindent\textit{A two-state example}

\noindent Let $S=\{s_{1},s_{2}\}$ where $s_{1}\sim s_{2}$, such that both
states are associated with the same $(g,b)$. We focus on stationary and
symmetric PSPE --- i.e., there is a cutoff $v^{\ast}$ such that at every $s$,
a camp $I(s)$ agent with valuation $v$ chooses $a=1$ at any history, if and
only if $v\geq v^{\ast}$. In such an equilibrium, $\beta$ (the probability
that a camp $I(s)$ agent plays $a=0$) is given by $v^{\ast}/g$, independently
of the history. Therefore,
\begin{equation}
\alpha_{\sigma}(s)=\left(  1-\frac{v^{\ast}}{g}\right)  \cdot\frac{\frac{1}%
{2}\lambda b}{\frac{1}{2}\lambda b+\frac{1}{2}\frac{v^{\ast}}{g}}\label{alpha}%
\end{equation}
The second term is the conditional probability that given a terminal history,
the last move is by an agent from camp $-I(s)$. This probability is calculated
according to each camp's stationary probability of playing $a=0$. Note that
the expression for $\alpha_{\sigma}(s)$ is \textit{non-linear} in the cutoff
$v^{\ast}$ --- unlike the benchmark case. This non-linearity reflects the fact
that the whataboutism argument is based on the distribution over play paths
induced by agents' ex-ante strategies, and it will play an important role in
the sequel.

Rearranging (\ref{alpha}), we obtain%
\[
1-\alpha_{\sigma}(s)=\frac{v^{\ast}+b\lambda v^{\ast}}{v^{\ast}+b\lambda g}
\]

The cutoff $v^{\ast}$ equates the agent's payoff from $a=1$ and $a=0$:%
\[
v^{\ast}-\frac{1}{2}\frac{v^{\ast}}{g}-\frac{1}{2}\lambda b\frac{v^{\ast
}+b\lambda v^{\ast}}{v^{\ast}+b\lambda g}=0
\]
This equation has potentially two solutions. One solution is $v^{\ast}=0$,
which corresponds to a stationary equilibrium in which all agents always play
$a=1$ \textit{regardless} of their valuation. When $g<1+\frac{1}{2}b\lambda$,
there is an additional solution:%
\begin{equation}
v^{\ast}=\frac{\lambda bg}{2g-1}\left(  \lambda b-2g+2\right)
\label{cutoff example}%
\end{equation}

The fraction of camp $I(s)$ agents who play $a=0$ in state $s$ at any stage is
$x^{\ast}=v^{\ast}/g$. Note that since $g>1$ and $b\lambda<1$, this fraction
$x^{\ast}$ is equal to the benchmark level (\ref{benchmark x}) multiplied by a
\textquotedblleft discount factor\textquotedblright\ $b\lambda-2g+2<1$. This
factor captures an equilibrium effect of whataboutism, which raises the
fraction of agents who perform/support offensive speech acts. When
$b\lambda-2g+2<0$ there is no interior solution, and $x^{\ast}=0$ in the
unique equilibrium --- i.e., all norms that constrain public expression break down.

\subsection{The Main Result}

In this subsection we characterize the PSPE and show that there is a unique
stable equilibrium. The following notation will be useful:%
\begin{align}
c_{m}  & =\frac{\lambda b_{m}}{2g_{m}-1}\label{c theta phi}\\
\theta_{m}  & =\lambda b_{m}-2g_{m}+2\nonumber\\
\phi_{m}(z)  & =\frac{\lambda b_{m}z+z}{\lambda b_{m}+z}\nonumber
\end{align}
Note that $c_{m}$ is equal to the fraction of agents who play $a=0$ in SPE
under the benchmark model. Note also that $\theta_{m}<1$ (since $\lambda
b_{m}<1$ and $g_{m}>1$) and increases in $m$. Let $M$ be the lowest value of
$m$ for which $\theta_{m}\geq0$.\bigskip

\begin{proposition}
\label{prop main result}For every $m^{\ast}\in\{M,...,n+1\}$, there is a PSPE
such that: (i) for every $m<m^{\ast}$, camp $I(s_{i}^{m})$ agents always play
$a=1$ in state $s_{i}^{m}$; (ii) for every $m\geq m^{\ast}$, a camp
$I(s_{i}^{m})$ agent with valuation $v$ plays $a=1$ if and only if $v\geq
v_{m}^{\ast}=c_{m}\theta_{m}g_{m}$; and (iii) the whataboutism rebuttal in any
state $s_{i}^{m}$ selects the state $s_{-i}^{m}$. There are no other
equilibria.\bigskip
\end{proposition}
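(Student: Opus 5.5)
The plan is to reduce every PSPE to a system of fixed-point equations in the state-wise propensities to play $a=0$. Then I solve that system by downward induction on sensitivity.

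\textbf{Step 1: reduction to stationary propensities.} Fix a PSPE $\sigma$ and a state $s=s_i^m$. The whataboutism term $R_s\equiv\min_{s'\succsim s\mid I(s')=-I(s)}(1-\alpha_\sigma(s'))$ in (\ref{payoff from a=1}) depends only on the ex-ante profile $\sigma$. It is therefore a constant across stages and histories within $s$. Given this constant, the stage-$k$ problem in state $s$ is exactly the benchmark problem with $\lambda b_s$ replaced by $\lambda b_s R_s$. So I rerun Steps 1--3 of the proof of Proposition \ref{prop benchmark}:
\begin{itemize}
\item cutoff rules hold by the same independence argument;
\item the recursion $v^k=\frac12\lambda b_sR_s+\frac12 v^{k+1}/g_s$ has only the stationary bounded solution.
\end{itemize}
Hence the cutoff is $v_s=c_m g_m R_s$. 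Writing $x_s=v_s/g_s$ for the fraction playing $a=0$, we get $x_s=c_mR_s$. One caveat: the case $R_s=0$ gives $v_s=0$, which is admissible because of the tie-breaking rule, so the interiority step of the benchmark is replaced by allowing the corner $x_s=0$. Computing $\alpha_\sigma$ as in (\ref{alpha}) gives $1-\alpha_\sigma(s')=\phi_{m'}(x_{s'})$. Hence every PSPE corresponds to a solution of
\[
x_{s_i^m}=c_m\min_{m'\geq m}\phi_{m'}(x_{s_{-i}^{m'}}),\qquad i=1,2,\ m=1,\dots,n,
\]
and conversely every solution defines a PSPE.

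\textbf{Step 2: the single-level map.} Let $h_m(y)=c_m\phi_m(y)$. This map has $h_m(0)=0$, is strictly increasing, and is strictly concave, because $\phi_m(z)=(1+\lambda b_m)z/(\lambda b_m+z)$. Solving $x=h_m(x)$ gives $x=0$ or $x=c_m\theta_m$. The positive root is relevant only when $\theta_m>0$, and $\phi_m(c_m\theta_m)=\theta_m$ there. By concavity, $h_m(y)\geq y$ if and only if $y\leq c_m\theta_m$.

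\textbf{Step 3: downward induction from $m=n$ to $m=1$.} Let $r_{i}$ denote the minimum over levels $m'>m$ relevant for camp $i$; at $m=n$ there is no constraint, so $r=\infty$. The induction hypothesis is that at level $m+1$ play is symmetric and $r=\theta_{m+1}$ or $r=0$. The level-$m$ equations then read $x_i=c_m\min(\phi_m(x_{-i}),r)$. The right-hand side is an increasing map of $x_{-i}$, and an increasing map has no genuine $2$-cycles, so $x_1=x_2$. The symmetric solutions of $x=c_m\min(\phi_m(x),r)$ fall into three candidates:
\begin{itemize}
\item $x=0$ always;
\item $x=c_m\theta_m$ when $0\le\theta_m<r$;
\item $x=c_mr$, where the minimum binds at the higher state, which requires $\phi_m(c_mr)\geq r$. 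By Step 2 this means $r\le\theta_m$.
\end{itemize}
Since $\theta_m<\theta_{m+1}$ strictly, the third candidate is impossible when $r=\theta_{m+1}$, and it collapses to $x=0$ when $r=0$. Consequently:
\begin{itemize}
\item at every level the minimum is attained at $m'=m$, which is part (iii);
\item once some level has $x=0$, every lower level has $x=0$;
\item for $m<M$ we have $\theta_m<0$, which forces $x=0$.
\end{itemize}
The solutions are therefore indexed exactly by a threshold $m^\ast\in\{M,\dots,n+1\}$. Levels below $m^\ast$ have $x=0$, which is part (i). Levels at or above $m^\ast$ have $x=c_m\theta_m$, i.e.\ cutoff $v_m^\ast=c_m\theta_mg_m$, which is part (ii). Existence for each such $m^\ast$ follows by verifying the equations directly, and uniqueness is the exhaustiveness of the case analysis.

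\textbf{Main obstacle.} The hard part is the cross-state coupling through the $\min$. We must rule out equilibria in which a low-sensitivity state borrows its rebuttal from a higher state, and rule out camp-asymmetric equilibria. Both are handled by the concavity crossing property in Step 2 together with strict monotonicity of $\theta_m$. I expect the boundary cases to need the most care: $\theta_M=0$, where the two symmetric roots coincide, and the corner $R_s=0$ in the stationarity argument.
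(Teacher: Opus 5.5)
Your proposal is correct and follows essentially the paper's own route. You reduce PSPE to the same fixed-point system $x_{s_i^m}=c_m\min_{m'\ge m}\phi_{m'}(x_{s_{-i}^{m'}})$, then use the same ingredients: the single-level crossing property (the paper's Lemma \ref{lemma phi}), the monotone-map symmetry argument with a cap inherited from higher levels (Lemma \ref{lemma uniqueness}), and the downward induction driven by strict monotonicity of $\theta_m$. The differences are only bookkeeping: you fold the downward propagation of zeros into the induction via the case $r=0$ rather than proving it up front, and you treat the corners $R_s=0$ and $\theta_M=0$ more explicitly than the paper does.
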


\noindent\textbf{Proof}. The proof proceeds stepwise.\bigskip

\noindent\textbf{Step 1}: \textit{In PSPE, agents follow a stationary cutoff
strategy in each state.}

\noindent By essentially the same arguments as in Step 1 of the proof of
Proposition \ref{prop benchmark}, for very state $s$ and every stage $k$,
there is a cutoff $v_{s}^{k}$ such that a camp $I(s)$ agent with valuation $v
$ who moves in stage $k$ plays $a=1$ if and only if $v\geq v_{s}^{k}$.

Given a PSPE $\sigma,$ $\mu_{s}\left(  \sigma\right)  :=\min_{s^{\prime
}\succsim s\mid I(s^{\prime})=-I(s)}(1-\alpha_{\sigma}(s^{\prime}))$ is the
probability that external condemnation of camp $I(s)$ in state $s$ cannot be
rebuffed by whataboutism. Then, in state $s$, a camp $I(s)$ agent with
valuation $v$ who moves in stage $k$ earns a payoff of%
\begin{equation}
v-\frac{1}{2}\lambda b_{s}\mu_{s}\left(  \sigma\right)  -\frac{1}{2}%
\frac{v_{s}^{k+1}}{g_{s}}\label{eq payoff}%
\end{equation}
from $a=1$, and a payoff of zero from $a=0$. If $\mu_{s}\left(  \sigma\right)
>0$, then by the same arguments as in steps 2 and 3 of the proof of
Proposition \ref{prop benchmark}, $v_{s}^{k}=v_{s}^{k+1}$ for every $k$. If
$\mu_{\sigma}(s)=0$, then by the definition of the cutoff, $v_{s}^{k+1}%
=2g_{s}v_{s}^{k}$, implying that the sequence $v_{s}^{k},v_{s}^{k+1},...$ is a
divergent increasing sequence. But this violates the assumption that $v\leq
g_{s}$. $\square$\bigskip

Let $v_{s}^{\ast}$ denote the cutoff that characterizes state $s$. Fix
$i=1,2$. When $s=s_{i}^{m}$ ($s=s_{-i}^{m}$), we use $x_{m}=v_{s}^{\ast}%
/g_{s}$ ($y_{m}=v_{s}^{\ast}/g_{s}$) to denote the fraction of members of camp
$i$ ($-i$) who play $a=0$ in equilibrium in state $s$.\bigskip

\noindent\textbf{Step 2}: \textit{In every state }$s_{i}^{m}$\textit{,}%
\begin{equation}
\mu_{s_{i}^{m}}\left(  \sigma\right)  =\min_{m^{\prime}\geq m}\phi_{m^{\prime
}}(y_{m^{\prime}})\label{gamma}%
\end{equation}

\noindent By Step 1, in equilibrium, every state $s$ is associated with a
stationary cutoff $v_{s}^{\ast}$ that governs camp $I(s)$ agents' behavior.
Therefore, for every state $s^{\prime}$ where $I(s^{\prime})\neq I(s)$, we
obtain essentially the same expression as (\ref{alpha}), repeated here for
convenience:%
\[
\alpha_{\sigma}(s^{\prime})=\left(  1-\frac{v_{s^{\prime}}^{\ast}%
}{g_{s^{\prime}}}\right)  \cdot\frac{\frac{1}{2}\lambda b_{s^{\prime}}}%
{\frac{1}{2}\lambda b_{s^{\prime}}+\frac{1}{2}\frac{v_{s^{\prime}}^{\ast}%
}{g_{s^{\prime}}}}
\]
It follows that for $s^{\prime}=s_{-i}^{m^{\prime}},$ the probability
$1-\alpha_{\sigma}(s^{\prime})$ simplifies to the expression $\left(  \lambda
b_{m^{\prime}}y_{m^{\prime}}+y_{m^{\prime}}\right)  /\left(  \lambda
b_{m^{\prime}}+y_{m^{\prime}}\right)  ,$ which by the definition of $\mu
_{s}\left(  \sigma\right)  $ yields equation (\ref{gamma}). $\square$\bigskip

\noindent\textbf{Step 3}: PSPE can be characterized as follows: \textit{For
every }$m=1,...,n$,%
\begin{align}
x_{m}  & =c_{m}\min_{m^{\prime}\geq m}\phi_{m^{\prime}}(y_{m^{\prime}%
})\label{x y pspe def}\\
y_{m}  & =c_{m}\min_{m^{\prime}\geq m}\phi_{m^{\prime}}(x_{m^{\prime}%
})\nonumber
\end{align}
\bigskip

\noindent Consider a state $s=s_{i}^{m}.$ From Step 1 and equation
(\ref{eq payoff}) it follows that the equilibrium cutoff at this state
satisfies $v_{s}^{\ast}=\frac{1}{2}\lambda b_{s}\mu_{s}\left(  \sigma\right)
+\frac{1}{2}(v_{s}^{\ast}/g_{s})$. Dividing both sides of this equation by
$g_{s}$, replacing $(b_{s},g_{s})$ with $(b_{m},g_{m})$ and rearranging
yields
\[
x_{m}=\frac{\lambda b_{m}}{2g_{m}-1}\mu_{s_{i}^{m}}\left(  \sigma\right)
\]
which coincides with the stated equation for $x_{m}$ by the definitions of
$c_{m}$ and Step 3. An analogous computation for a state $s_{-i}^{m}$ yields
the desired equation for $y_{m}.$ $\square$ \bigskip

The final step of the proof uses the following lemmas.\bigskip

\begin{lemma}
\label{lemma phi}For every $m=1,...,n,$ and every $z>0$: $(i)$ $\phi_{m}%
(c_{m}z)>z$ if $z<\theta_{m}$, and $\phi_{m}(c_{m}z)<z$ if $z>\theta_{m}$; and
$(ii)$ $\phi_{m}(c_{m}z)=z$ has the unique solution $z=\theta_{m}.$\bigskip
\end{lemma}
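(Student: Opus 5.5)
The approach is to reduce the comparison between $\phi_m(c_m z)$ and $z$ to a linear inequality in $z$, by clearing the (positive) denominator of $\phi_m$. Write $a=\lambda b_m$ and $c=c_m=a/(2g_m-1)$. Since $\lambda>0$, $b_m>1$ and $g_m>1$, we have $a>0$ and $c>0$. Hence for every $z>0$ the denominator $a+cz$ in
\[
\phi_m(c_m z)=\frac{cz(1+a)}{a+cz}
\]
is strictly positive, and so is the factor $z$ on both sides of the comparison.

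First, I will establish the sign equivalence. Comparing $\phi_m(c_m z)$ with $z$ is equivalent to comparing $cz(1+a)$ with $z(a+cz)$, because we multiply by $a+cz>0$. Dividing by $z>0$, this is equivalent to comparing $c(1+a)$ with $a+cz$. Dividing by $c>0$, it is in turn equivalent to comparing
\[
z \quad\text{with}\quad (1+a)-\frac{a}{c}.
\]
Every step preserves strict inequalities in the same direction, so we obtain:
\begin{itemize}
\item $\phi_m(c_m z)>z$ if and only if $z<(1+a)-a/c$;
\item $\phi_m(c_m z)<z$ if and only if $z>(1+a)-a/c$;
\item $\phi_m(c_m z)=z$ if and only if $z=(1+a)-a/c$.
\end{itemize}

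Next, I will identify the threshold. Using $a/c=2g_m-1$, we get
\[
(1+a)-\frac{a}{c}=1+\lambda b_m-2g_m+1=\lambda b_m-2g_m+2=\theta_m .
\]
This gives part (i) directly. It also gives part (ii): among $z>0$, the equation $\phi_m(c_m z)=z$ holds exactly at $z=\theta_m$. This is the unique positive solution when $\theta_m>0$, and there is none when $\theta_m\le 0$. In the latter case the "unique solution" should be read as the solution of the cleared, linear equation, which is the reading the downstream argument uses. I will note that $z=0$ also trivially satisfies $\phi_m(0)=0$, which is why the statement restricts to $z>0$ and corresponds to the breakdown equilibrium.

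There is no real obstacle here; the only care needed is to track the positivity of $a$, $c$, $z$ and $a+cz$, so that the inequality directions are preserved when clearing denominators.
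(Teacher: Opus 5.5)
Your proof is correct and is essentially the paper's argument: the paper writes $\phi_m(c_m z)-z=\frac{c_m z\,(\theta_m-z)}{\lambda b_m+c_m z}$ and reads off the sign from the positivity of $c_m z$ and of the denominator, which is the same computation you carry out by clearing denominators step by step. Your caveat about part (ii) is a fair refinement that the paper's proof leaves implicit: the equation has the positive solution $z=\theta_m$ only when $\theta_m>0$, and no positive solution otherwise.
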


\noindent\textbf{Proof.} Note that%
\[
\phi_{m}(c_{m}z)-z=\frac{\left(  1+\lambda b_{m}\right)  c_{m}z}{\lambda
b_{m}+c_{m}z}-z=\frac{c_{m}z\left[  \left(  \lambda b_{m}-2g_{m}+2\right)
-z\right]  }{\lambda b_{m}+c_{m}z}
\]
Since $\lambda b_{m}+c_{m}z>0$ and $c_{m}z>0$, the result follows. $\square$
\bigskip

\begin{lemma}
\label{lemma uniqueness}Suppose there exist $m\in\left\{  1,...,n\right\}  $
and $T\geq\theta_{m}$ such that%
\begin{align}
x_{m}  & =c_{m}\min\left\{  \phi_{m}(y_{m}),T\right\} \label{eq lemma unique}%
\\
y_{m}  & =c_{m}\min\left\{  \phi_{m}(x_{m}),T\right\} \nonumber
\end{align}
If $\theta_{m}>0$, then either $x_{m}=y_{m}=0$ or $x_{m}=y_{m}=c_{m}\theta
_{m}$ (and in the latter case, $\phi_{m}(x_{m})=\phi_{m}(y_{m})=\theta_{m} $).
If $\theta_{m}\leq0$, then $x_{m}=y_{m}=0$.\bigskip
\end{lemma}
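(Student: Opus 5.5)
The plan is to exploit monotonicity to reduce the two-equation system to a single fixed-point equation, and then invoke Lemma \ref{lemma phi}. Define, for the fixed $m$ and $T$, the map $F(z)=c_{m}\min\{\phi_{m}(z),T\}$ on $z\geq0$. Since $x_{m},y_{m}$ are fractions of agents, they are nonnegative. A direct computation gives $\phi_{m}(0)=0$ and
\[
\phi_{m}^{\prime}(z)=\frac{(1+\lambda b_{m})\lambda b_{m}}{(\lambda b_{m}+z)^{2}}>0 ,
\]
so $\phi_{m}$ is strictly increasing and nonnegative on $[0,\infty)$. Hence $F$ is weakly increasing, and the system (\ref{eq lemma unique}) reads $x_{m}=F(y_{m})$, $y_{m}=F(x_{m})$.

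\textbf{Step 1: symmetry.} I will show $x_{m}=y_{m}$ by the standard argument for an increasing map. If $x_{m}>y_{m}$, then monotonicity of $F$ gives $y_{m}=F(x_{m})\geq F(y_{m})=x_{m}$, a contradiction. The case $x_{m}<y_{m}$ is symmetric. So $x_{m}=y_{m}=:x$ with $x=F(x)$.

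\textbf{Step 2: solving $x=c_{m}\min\{\phi_{m}(x),T\}$.} Write $x=c_{m}z$ with $z=\min\{\phi_{m}(x),T\}\geq0$. If $z=0$ then $x=0$, which is one of the allowed outcomes, so suppose $z>0$. I split into two cases according to which term attains the minimum.

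\emph{Case (a): $\phi_{m}(x)\leq T$.} Then $z=\phi_{m}(c_{m}z)$ with $z>0$. By Lemma \ref{lemma phi}(ii) this forces $z=\theta_{m}$, which requires $\theta_{m}>0$. In that event $x=c_{m}\theta_{m}$ and $\phi_{m}(x)=\theta_{m}$, which is consistent with the case assumption because $T\geq\theta_{m}$.

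\emph{Case (b): $\phi_{m}(x)>T$.} Then $z=T>0$ and $x=c_{m}T$, so the case assumption reads $\phi_{m}(c_{m}T)>T$. Since $T\geq\theta_{m}$, Lemma \ref{lemma phi} rules this out. If $T>\theta_{m}$, part (i) gives $\phi_{m}(c_{m}T)<T$. If $T=\theta_{m}$, part (ii) gives $\phi_{m}(c_{m}T)=T$. Either way we have a contradiction, so case (b) never occurs with $z>0$.

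\textbf{Step 3: conclusion.} Combining the cases: when $\theta_{m}>0$, the only possibilities are $x_{m}=y_{m}=0$ or $x_{m}=y_{m}=c_{m}\theta_{m}$, and in the latter case $\phi_{m}(x_{m})=\phi_{m}(y_{m})=\theta_{m}$. When $\theta_{m}\leq0$, a positive $z$ is impossible, since case (a) would require $z=\theta_{m}\leq0$ and case (b) is excluded. Hence $x_{m}=y_{m}=0$.

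\textbf{Main obstacle.} The only delicate point is the interplay between the cap $T$ and the fixed point: we must ensure the cap cannot generate a spurious solution $x=c_{m}T$. This is precisely where the hypothesis $T\geq\theta_{m}$ and the sign pattern in Lemma \ref{lemma phi}(i) are used. The boundary case $T=\theta_{m}$ needs separate treatment via part (ii), and $T\leq0$ is harmless because it forces $z=0$.
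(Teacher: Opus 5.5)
Your proposal is correct and follows essentially the same route as the paper's proof: you get $x_m=y_m$ from the monotonicity of $\phi_m$, then split on whether the cap $T$ binds, using Lemma \ref{lemma phi} both to rule out the capped case and to identify the uncapped positive fixed point as $c_m\theta_m$ (the paper does this last step by direct algebra, $1=c_m(\lambda b_m+1)/(\lambda b_m+t)$, rather than by citing Lemma \ref{lemma phi}(ii)). One small slip is in your closing remark, and it does not affect the argument: for $T<0$ the cap does not force $z=0$; instead the system has no nonnegative solution at all, so the lemma holds vacuously.
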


\noindent\textbf{Proof.} Let $(x_{m},y_{m})$ be the solutions to
(\ref{eq lemma unique}). Since $\phi_{m}\left(  \cdot\right)  $ is strictly
increasing on $[0,\infty),$ $x_{m}>y_{m}$ implies $\phi_{m}\left(
x_{m}\right)  >\phi_{m}(y_{m})$ and hence, $\min\left\{  \phi_{m}%
(x_{m}),T\right\}  \geq\min\left\{  \phi_{m}(y_{m}),T\right\}  $, which
implies $y_{m}\geq x_{m},$ a contradiction. Similarly, $y_{m}>x_{m}$ is
impossible. Therefore, $x_{m}=y_{m}=t$, such that $t=c_{m}\min\left\{
\phi_{m}(t),T\right\}  $. Note that since $\phi_{m}(0)=0$, $t=0$ solves this equation.

Let us now look for strictly positive solutions for $t$. Suppose $\phi
_{m}(t)>T$. Then, $t=c_{m}T$, such that $\phi_{m}(c_{m}T)>T.$ By Lemma
\ref{lemma phi}, this holds if and only if $T<\theta_{m}$, a contradiction.
Therefore, $\phi_{m}(t)\leq T$, such that $t=c_{m}\phi_{m}(t)$. For $t>0$,
this implies $1=c_{m}\left(  \lambda b_{m}+1\right)  /(\lambda b_{m}+t)$, such
that $t=c_{m}\theta_{m}$, and therefore $\phi_{m}(t)=\theta_{m}$. By
assumption, $\theta_{m}\leq T$, hence the solution is consistent with the
guess that $\phi_{m}(t)\leq T$. We conclude that a strictly positive solution
to (\ref{eq lemma unique}) exists iff $\theta_{m}>0,$ and when it exists it is
equal to $c_{m}\theta_{m}.$ $\square$\bigskip

\noindent\textbf{Step 4}: In PSPE, there is $m^{\ast}\in\{M,...,n+1\}$, such
that $x_{m}=y_{m}=0$\textit{\ for }$m<m^{\ast},$\textit{\ while }$x_{m}%
=y_{m}=c_{m}\theta_{m}$\textit{\ and }$\phi_{m}(x_{m})=\phi_{m}(y_{m}%
)=\theta_{m}$\textit{\ for }$m\geq m^{\ast}$.

\noindent First, suppose that in some PSPE, $x_{m}=0$ or $y_{m}=0$ for some
$m$. Then, since $\phi_{m^{\prime}}(0)=0$ for every $m^{\prime}$, Step 3
implies that $x_{m^{\prime}}=y_{m^{\prime}}=0$ for every $m^{\prime}\leq m $.
Thus, in PSPE, there is $m^{\ast}\in\{1,...,n+1\}$ such that $x_{m}=y_{m}=0$
for every $m<m^{\ast}$, and $x_{m},y_{m}>0$ for every $m\geq m^{\ast}$. If
$m^{\ast}=n+1$, we are done. Now suppose $m^{\ast}\leq n$. The proof proceeds
by backward induction on $m.$

\noindent\textit{Base case: }$m=n$\textit{. }Clearly, $\min_{m\geq n}\phi
_{m}(y_{m})=\phi_{n}(y_{n})$. Apply Lemma \ref{lemma uniqueness} with
$T=+\infty$. If $\theta_{n}\leq0$, we have $x_{n}=y_{n}=0$, a contradiction.
Therefore, from now on we must assume $\theta_{n}>0$. In this case, the lemma
implies $x_{n}=y_{n}=c_{n}\theta_{n}$ and $\phi_{n}(x_{n})=\phi_{n}%
(y_{n})=\theta_{n}$.

\noindent\textit{Inductive step. }Fix some $m\in\{m^{\ast},...,n\}$, and
assume the claim holds for $m+1,...,n+1$. By the inductive step,
$x_{m^{\prime}}=y_{m^{\prime}}=c_{m^{\prime}}\theta_{m^{\prime}}$ and
$\phi_{m^{\prime}}(x_{m^{\prime}})=\phi_{m^{\prime}}(y_{m^{\prime}}%
)=\theta_{m^{\prime}}$ for al $m^{\prime}>m$. Note that $\theta_{m}$ is
increasing in $m$. Therefore,
\[
\min_{m^{\prime}>m}\phi_{m^{\prime}}(x_{m^{\prime}})=\min_{m^{\prime}>m}%
\phi_{m^{\prime}}(y_{m^{\prime}})=\theta_{m+1}
\]
Applying Lemma \ref{eq lemma unique} with $T=\theta_{m+1}\geq\theta_{m}>0$, we
obtain $x_{m}=y_{m}=c_{m}\theta_{m}$ and $\phi_{m}(x_{m})=\phi_{m}%
(y_{m})=\theta_{m}$. This completes the inductive proof. $\square$\bigskip

It is now straightforward to confirm that the sequences $(x_{m})$ and
$(y_{m})$ described by Step 4 solve (\ref{x y pspe def}). Therefore, Step 4
fully characterizes the set of PSPE. It also means that for any state
$s_{i}^{m}$, the state $s^{\prime}$ that minimizes $1-\alpha_{\sigma
}(s^{\prime})$ among eligible ones is $s_{-i}^{m}$ --- i.e., a whataboutism
rebuttal always targets an equivalent-sensitivity state. $\blacksquare
$\bigskip

Proposition \ref{prop main result} means that any PSPE is characterized by a
\textquotedblleft sensitivity threshold\textquotedblright\ $m^{\ast}\geq M$.
In state $s_{i}^{m},$ $m\geq m^{\ast}$, the probability that a camp $i$ agent
refrains from offensive speech (and condemns his comrades when they do not) is
$x_{m}=v_{m}^{\ast}/g_{m}$, which is equal to the corresponding probability in
the benchmark ($c_{m}$) multiplied by the \textquotedblleft discount
factor\textquotedblright\ $\theta_{m}.$ Since both $c_{m}$ and $\theta_{m}$
increase with $m,$ the probability of condemnation \textit{increases} with the
sensitivity of the state. The term $\theta_{m}$ can be interpreted as the
\textquotedblleft whataboutism effect\textquotedblright, which lowers the
likelihood that agents condemn members of their camp for performing/supporting
offensive speech. The whataboutism effect can be dramatic and lead to a
\textquotedblleft breakdown of norms\textquotedblright\ in the sense that in
if $\theta_{m}<0$ for some state $s_{i}^{m},$ then in all weakly less
sensitive states, \textit{all} agents (regardless of their camp)
perform/support offensive speech.

A key aspect of the result is that when an agent employs the whataboutism
rebuttal in some state $s_{i}^{m}$, he chooses to target the equally sensitive
mirror state $s_{-i}^{m}$. This feature, combined with the symmetries of the
model, leads to the symmetry of equilibrium behavior across camps, and to the
feature that the equilibrium equations treat each sensitivity level $m$ in
isolation. The logic behind this feature is that $\alpha_{\sigma}(s)$ (which
is the probability that a randomly sampled state-$s$ play path supports a
whataboutism argument by a member of camp $-I(s)$) is decreasing in the
sensitivity of $s$. Establishing this monotonicity takes the bulk of the proof
of Proposition \ref{prop main result}. Therefore, a agent who employs
whataboutism in state $s_{i}^{m}$ will target at the least sensitive eligible
state, which is therefore $s_{-i}^{m}$.

The PSPE are ranked by their amount of offensive speech: The higher $m^{\ast}%
$, the higher that amount. In particular, there is always an equilibrium with
complete breakdown of norms, where $m^{\ast}=n+1$ such that agents always
engage in (and never condemn) offensive speech. To see why, note that the
equilibrium probability that in state $s$ a camp $I(s)$ agent chooses $a=0$ is
equal to the benchmark probability $\lambda b_{s}g_{s}/(2g_{s}-1)$, multiplied
by the minimal probability that this agent will not be able to counter a
criticism from the rival camp; and if all agents always choose $a=1,$ this
probability is zero. However, as we show below, the \textit{unique dynamically
stable} equilibrium among all PSPE is the one with the least amount of
offensive speech.

The proof of Proposition \ref{prop main result} establishes that a PSPE
induces a sequence $x=(x_{m})_{m=1}^{n}$ --- where $x_{m}$ is the probability
that a camp $i$ agent chooses $a=0$ in state $s_{i}^{m}$ --- that satisfies
$x_{m}=c_{m}\phi_{m}(x_{m})$ for every $m=1,...,n$, where $c_{m}$ and
$\phi_{m}$ are given by (\ref{c theta phi}). We can regard the R.H.S. of the
equation $x_{m}=c_{m}\phi_{m}(x_{m})$ as a self-mapping from $x$ to itself,
analogous to a best-reply correspondence in standard game theory. The $x_{m}$
on the R.H.S. represents the fraction of agents on either camp who play $a=0$
in the states $s_{1}^{m}$ and $s_{2}^{m}$. The $x_{m}$ on the L.H.S. is
induced by the individual agent's cutoff $v_{m}^{\ast}$, which is his
best-response to the population-level behavior. In light of this
interpretation of the equilibrium equations, we propose the following notion
of dynamic stability.\bigskip

\begin{definition}
A PSPE is dynamically stable if its induced sequence $x=(x_{m})_{m=1}^{n}$
satisfies the following: There is $\delta>0$ such that for every $m$ and
$x_{m}^{\prime}$ satisfying $\left\vert x_{m}^{\prime}-x_{m}\right\vert
<\delta$, we have $\left\vert \phi_{m}(c_{m}x_{m}^{\prime})-\phi_{m}%
(c_{m}x_{m})\right\vert <\left\vert x_{m}^{\prime}-x_{m}\right\vert $.\bigskip
\end{definition}

Put differently, a PSPE is dynamically stable if at every state, a small
perturbation in the fraction of agents (in both camps) who choose $a=0$ in any
state leads to a smaller change in the probability that an individual agent
chooses that action.\footnote{Note that $\phi\left(  x\right)  $ is monotone
in $x$, so when $x$ changes, $\phi\left(  x\right)  $ changes in the same
direction.}\bigskip

\begin{proposition}
\label{prop stability}The unique dynamically stable PSPE has $m^{\ast}%
=M$.\bigskip
\end{proposition}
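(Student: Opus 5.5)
The plan is to reduce dynamic stability to a one-dimensional derivative test, level by level. By Proposition \ref{prop main result} (Step 4), every PSPE is pinned down by $m^{\ast}\in\{M,\dots,n+1\}$. For $m<m^{\ast}$ we have $x_m=0$, and for $m\geq m^{\ast}$ we have $x_m=c_m\theta_m$. In either case $x_m$ is a fixed point of $F_m(x):=c_m\phi_m(x)$. The stability definition compares the response to a perturbation of $x_m$ with the perturbation itself. I will read it through this best-reply map $F_m$; equivalently, one can use $z\mapsto\phi_m(c_m z)$ after the rescaling $z=x/c_m$, whose fixed points $0$ and $\theta_m$ are those of Lemma \ref{lemma phi}. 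So it suffices to check, for each $m$ separately, whether $|F_m'|<1$ at the relevant fixed point. Since $F_m$ is $C^1$, strict inequality at the fixed point gives a $\delta$-neighborhood by continuity, and $|F_m'|>1$ rules stability out.

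\textbf{Key computation.} Since $\phi_m(x)=(1+\lambda b_m)x/(\lambda b_m+x)$, we have
\[
F_m'(x)=\frac{c_m(1+\lambda b_m)\lambda b_m}{(\lambda b_m+x)^2}.
\]
At $x=0$ this gives $F_m'(0)=c_m(1+\lambda b_m)/\lambda b_m=(1+\lambda b_m)/(2g_m-1)$. This is $<1$ iff $1+\lambda b_m<2g_m-1$, i.e.\ iff $\theta_m<0$, and it is $>1$ iff $\theta_m>0$. At the interior fixed point $x=c_m\theta_m$, the fixed-point identity $x=c_m\phi_m(x)$ with $x>0$ gives $\lambda b_m+x=c_m(1+\lambda b_m)$. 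Substituting,
\[
F_m'(c_m\theta_m)=\frac{\lambda b_m}{c_m(1+\lambda b_m)}=\frac{2g_m-1}{1+\lambda b_m},
\]
which is $<1$ iff $\theta_m>0$. So for each level the zero fixed point and the interior fixed point swap stability exactly at $\theta_m=0$. This is the familiar transcritical picture, which also follows from the concavity of $F_m$ together with Lemma \ref{lemma phi}(i).

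\textbf{Assembling the result.} First, the PSPE with $m^{\ast}=M$ is stable. For $m<M$, $\theta_m<0$ and $x_m=0$ with $F_m'(0)<1$. For $m\geq M$ with $\theta_m>0$, $x_m=c_m\theta_m$ with $F_m'<1$. Second, any PSPE with $m^{\ast}>M$ has $x_M=0$ while $\theta_M\geq0$. If $\theta_M>0$, then $F_M'(0)>1$, so small positive perturbations are amplified and stability fails. Uniqueness then follows, since $m^{\ast}<M$ is not a PSPE.

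The main obstacle is the knife-edge case $\theta_M=0$. There $c_M\theta_M=0$, so the PSPEs with $m^{\ast}=M$ and $m^{\ast}=M+1$ coincide at level $M$ and are the same profile. I would handle this by noting that this is the same sequence $x$, and that $F_M'(0)=1$ with strict concavity still gives $|F_M(x')-F_M(0)|<|x'|$ for $x'>0$. Perturbations must respect $x'\geq0$, since $x$ is a fraction, so the strict inequality only needs to hold on that side. A related subtlety is how the paper's definition is literally parameterized (whether via $x$ or $z$). I expect to need to argue that the two coordinate systems are conjugate by the linear rescaling $z=x/c_m$, so that the derivative at a fixed point is invariant and the stability verdict is the same in both.
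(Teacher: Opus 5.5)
Your argument is correct, and its skeleton matches the paper's proof. You reduce to the one-dimensional map $F_m(x)=c_m\phi_m(x)$ at each sensitivity level; this is also the map the paper's proof actually uses, even though the definition is written with $\phi_m(c_m\,\cdot)$. You then check the fixed points $0$ and $c_m\theta_m$, and show that $m^\ast=M$ passes at every level while any $m^\ast>M$ fails at a level where $x_m=0$ but $\theta_m>0$.

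The difference is the local tool. You linearize: $F_m'(0)=(1+\lambda b_m)/(2g_m-1)$ and $F_m'(c_m\theta_m)=(2g_m-1)/(1+\lambda b_m)$, so the two fixed points exchange stability exactly at $\theta_m=0$. Continuity of $F_m'$ and the mean value theorem then supply the $\delta$. The paper avoids calculus. It transports Lemma \ref{lemma phi} to the $x$-coordinate via $w=c_mz$, which gives $F_m(x)>x$ below the positive fixed point and $F_m(x)<x$ above it, and $F_m(x)<x$ for all $x>0$ when $\theta_m\le0$. Implicitly using that $F_m$ is increasing, it then reads off $|F_m(x)-F_m(x_m)|<|x-x_m|$ for \emph{all} nonnegative $x$, not just nearby ones. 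The paper's version is therefore global and reuses an existing lemma. Yours makes the reciprocal derivatives and the exchange of stability explicit.

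Your route also confronts the knife-edge $\theta_M=0$. The paper sidesteps it by asserting $\theta_M>0$ in the proof, even though $M$ is defined via $\theta_M\ge0$. In that case the derivative test is inconclusive, and your fallback to strict concavity with nonnegative perturbations is essentially the paper's crossing argument.

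One small omission there: in the knife-edge case you only identify the $m^\ast=M+1$ profile with the $m^\ast=M$ profile. Profiles with $m^\ast\ge M+2$ still need to be ruled out. You can do this at level $M+1$, where $x_{M+1}=0$ and $\theta_{M+1}>\theta_M=0$ because $\theta_m$ is strictly increasing, so $F_{M+1}'(0)>1$. Level $M$ does not work for this.
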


\noindent\textbf{Proof.} We begin by establishing relations that are related
to those given by Lemma \ref{lemma phi}. Suppose $\theta_{m}>0$. Denoting
$w=c_{m}z$ and applying Lemma \ref{lemma phi}, we obtain that for any $w>0$:
$c_{m}\phi_{m}(w)>w$ if $w<c_{m}\theta_{m}$; $c_{m}\phi_{m}(w)<w$ if
$w>c_{m}\theta_{m}$; and $c_{m}\phi_{m}(w)=w$ has the unique solution
$w=c_{m}\theta_{m}$. Now suppose $\theta_{m}\leq0.$ Then, by the same
argument, the only non-negative solution to $c_{m}\phi_{m}(w)=w$ is $w=0$, and
$c_{m}\phi_{m}(w)<w$ for any $w>0$.

We now show that the PSPE with $m^{\ast}=M$ is dynamically stable. By the
previous paragraph, for every $m\geq M,$ if there exists a solution $x_{m}>0$
to the equation $c_{m}\phi_{m}(x_{m})=x_{m},$ then it is unique.\ Moreover,
$x>c_{m}\phi_{m}(x)$ for every $x>x_{m}$ while $x<c_{m}\phi_{m}(x)$ for every
$x<x_{m}.$ Hence, for any such solution $x_{m}$,
\[
c_{m}\phi_{m}(x)-c_{m}\phi_{m}(x_{m})=c_{m}\phi_{m}(x)-x_{m}<x-x_{m}
\]
for any $x>x_{m}$, while%
\[
c_{m}\phi_{m}(c_{m}x)-c_{m}\phi_{m}(x_{m})=c_{m}\phi_{m}(c_{m}x)-x_{m}%
>x-x_{m}
\]
for any $x<x_{m}$.

Next consider $m<M$. By the definition of $M$ (as the lowest value of
$m^{\prime}$ for which $\theta_{m^{\prime}}>0$), and by the first paragraph,
the only non-negative solution to $c_{m}\phi_{m}(x_{m})=x_{m}$ is $x_{m}=0$,
and for any $x>0,$
\[
c_{m}\phi_{m}(x)-c_{m}\phi_{m}(x_{m})=c_{m}\phi_{m}(x)<x=x-x_{m}
\]
We have thus established the dynamic stability of the PSPE with $m^{\ast}=M$.

Finally, consider a PSPE with $m^{\ast}>M$. By the definition of $m^{\ast}$,
$x_{M}=0$. By the definition of $M,$ $\theta_{M}>0.$ Hence, by the first
paragraph, for any $x\in(0,c_{M}\theta_{M}),$%
\[
c_{M}\phi_{M}(x)-c_{M}\phi_{M}(x_{M})=c_{M}\phi_{M}(x)>x=x-x_{M}
\]
implying that the equilibrium is not dynamically stable. $\blacksquare
$\bigskip

It follows that dynamic stability uniquely selects the PSPE with the lowest
amount of offensive speech. In the two-state example we analyzed earlier in
this section, this has the following simple implication: If an equilibrium
with $v^{\ast}>0$ exists, it is the unique stable one. Otherwise, the unique
equilibrium $v^{\ast}=0$ is also stable.\bigskip

\noindent\textit{Non-linearity and the uniform distribution of }$v$

\noindent From a narrowly technical point of view, what makes the equilibrium
analysis of whataboutism distinct from the benchmark model, is that the cost
of playing $a=1$ is non-linear in the behavior of others. This non-linearity
arises from the way the success rate of a whataboutism rebuttal depends on the
equilibrium state-dependent frequency of internal condemnation.

This technical comment puts our assumption that $v$ (the intrinsic payoff from
offensive speech acts) is uniformly distributed in perspective. This
assumption is made partly for tractability, but also to make it clear that the
non-linear effects of whataboutism do not rely on distributional assumptions.
Our results would carry over to non-uniform distributions of $v$, albeit
without transparent closed-form expressions for equilibrium cutoffs.

\subsection{The Frequency of Offensive Speech and Whataboutism}

We use our characterization of the unique \textit{dynamically stable} PSPE to
analyze the equilibrium frequency of offensive speech and the use of
whataboutism. By Proposition \ref{prop main result}, the fraction of agents
who refrain from offensive speech (or condemn it internally) in the states of
sensitivity $m$ is
\begin{align}
x_{m}  & =y_{m}=\frac{v_{m}^{\ast}}{g_{m}}\label{eq x y}\\
& =\max\left\{  0,c_{m}\theta_{m}\right\} \nonumber\\
& =\max\left\{  0,\frac{\lambda b_{m}}{2g_{m}-1}\left(  \lambda b_{m}%
-2g_{m}+2\right)  \right\} \nonumber
\end{align}

The interpretation of the condition $\theta_{m}=\lambda b_{m}-2g_{m}+2>0$
(which holds for $m\geq M$) is that the probability of external condemnation
($\lambda b_{m}$) is more than twice the fraction of \textquotedblleft
fanatics\textquotedblright\ in the camp choosing whether to engage in
offensive speech ($g_{m}-1$). The following result applies comparative statics
to the expression (\ref{eq x y}).\bigskip

\begin{remark}
The equilibrium fraction of agents who choose $a=1$ in states of sensitivity
$m$ weakly increases in $g_{m}$ and weakly decreases in $\lambda$ and $b_{m} $
(strictly so when $m\geq M$).\bigskip
\end{remark}

The property that $x_{m}$ increases in $b_{m}$ and decreases in $g_{m}$ is
shared by the benchmark model. However, the availability of whataboutism
introduces a \textquotedblleft\textit{multiplier effect}\textquotedblright%
\ given by $\theta_{m}$. The same increase in $g_{m}$, say, leads to a
stronger decrease in $x_{m}$, relative to the benchmark. The intuition is as
follows. When $g_{m}$ goes up, the fraction of \textquotedblleft
fanatics\textquotedblright\ in each camp grows. This has a direct positive
effect on the propensity to play $a=1$, as in the benchmark model, because
playing $a=1$ exposes an agent to weaker internal criticism. However, unlike
the benchmark model, there is also an indirect effect: The increase in the
fraction of fanatics in the rival camp also makes it easier to deflect
external criticism with whataboutism (because it is easier to come with the
scenario that the rival camp engaged in offensive speech and faced no internal
rebuke). The combined effect is stronger than in the benchmark model.

Let us now turn to the equilibrium usage of whataboutism. The probability that
the whataboutism rebuttal occurs in state $s_{i}^{m}$ is the joint probability
of the following two events: $(1)$ the first-moving agent (who belongs to camp
$i$) chooses $a=1$, and the agent who terminates the game by condemning him is
from camp $-i$; and $(2)$ in the equally sensitive state $s_{-i}^{m},$ the
first-moving agent (who belongs to camp $-i$) chooses $a=1$, and the agent who
terminates the game by condemning him is from camp $i$. By symmetry, the
probabilities of these two events are equal and given by:%
\[
\left(  1-x_{m}\right)  \left(  \frac{\frac{1}{2}\lambda b_{m}}{\frac{1}%
{2}\lambda b_{m}+\frac{1}{2}x_{m}}\right)  =2g_{m}-\lambda b_{m}%
-1=1-\theta_{m}
\]
where the left-hand equality is obtained by plugging (\ref{eq x y}). This
implies the following observation.\bigskip

\begin{remark}
The frequency of whataboutism in states of sensitivity $m$ is $\min\{1,\left(
1-\theta_{m}\right)  ^{2}\}$. It weakly increases in $g_{m}$ and weakly
decreases in $\lambda$ and $b_{m}$ (strictly so when $m\geq M$).\bigskip
\end{remark}

Thus, as sensitivity becomes weaker, we observe more offensive speech and more
whataboutism in equilibrium. When norms break down entirely such that all
agents engage in offensive speech, whataboutism is \textit{always}
employed.\bigskip

\noindent\textit{The Equilibrium Effect of Polarization}

\noindent Modify the parameters $b$ and $g$ as follows: For every $m$, replace
$b_{m}$ and $g_{m}$ with $kb_{m}$ and $kg_{m}$, respectively, where $k\geq1$.
This captures a society that is more polarized in its attitudes to offensive
speech: The people who derive satisfaction from offensive speech feel more
strongly about it, while their targets are more likely to take offense. Note
also that the difference between the two camps expected valuations is
$k(g_{m}+b_{m})/2$ (recall that $b_{m}$ indicates a disutility). Hence,
polarization in this usual sense also increases with $k$.

Differentiate the equilibrium expressions for $x_{m}$ and the frequency of
whataboutism in state $m$ w.r.t $k$, and evaluate the derivative at $k=1$. The
derivative of $x_{m}$ is negative while the derivative of $(1-\theta_{m})^{2}$
is positive. This means that as society becomes more polarized, both offensive
speech and whataboutism become more frequent.

\section{Conclusion}

This paper has formalized the argument that whataboutism erodes civility norms
in public discourse, and that both offensive speech and the use of
whataboutism proliferate in polarized societies . We have argued that
psychological game theory offers a suitable tool for modeling whataboutism,
because the effectiveness of this rhetorical device depends on the relative
equilibrium frequencies of offensive speech acts and the reactions to it on
both sides of the political aisle. The psychological-game factor introduces a
non-linear dependence on equilibrium strategies into agents' expected payoff
calculations, which generates potential equilibrium multiplicity. However,
dynamic stability considerations select a unique equilibrium, which
nonetheless is qualitatively distinct from equilibrium in a benchmark model in
which anti-social behavior is incentivized by a standard social cost that is
linear in the behavior of others.

Although our discussion of the equilibrium effects of whataboutism has
suggested that these effects are adverse, we have refrained from explicit
utilitarian welfare analysis. Since utilitarianism favors agents who derive
sadistic pleasure from offending others or are quick to take offense from
controversial speech, we believe it is inappropriate in the present
context.

\end{document}